\documentclass[10pt, conference]{IEEEtran}
\IEEEoverridecommandlockouts
\usepackage{kotex}
\usepackage{ifpdf}
\usepackage{cite}
\usepackage{bbm}
\ifCLASSINFOpdf
\usepackage{graphicx}
\usepackage{subfigure}
\usepackage{epstopdf}
\usepackage{xcolor}
\else
\fi
\usepackage{float}
\usepackage{amsmath}
\usepackage{amssymb}
\usepackage{amsthm}
\usepackage{mathtools}

\newtheorem{prop}{Proposition}
\newtheorem{thm}{Theorem}
\newtheorem{remark}{Remark}

\usepackage{algorithm}
\usepackage{algpseudocode}

\allowdisplaybreaks
\usepackage{array}
\usepackage{stfloats}
\usepackage{tikz}
\theoremstyle{definition}
\newtheorem{definition}{Definition}
\usepackage[shortlabels]{enumitem}
\usepackage{comment}
\usepackage{multirow}
\usepackage{booktabs}

\usepackage{enumitem}

\def\BibTeX{{\rm B\kern-.05em{\sc i\kern-.025em b}\kern-.08em
    T\kern-.1667em\lower.7ex\hbox{E}\kern-.125emX}}
\begin{document}
\title{Differential Privacy in Feature Reconstruction Aided Federated Learning for Agent’s Semantic Communication Model Update\\
\thanks{This work was supported by Institute of Information \& communications Technology Planning \& Evaluation (IITP) grant funded by the Korea government(MSIT). (No.RS-2024-00398948, Next Generation Semantic Communication Network Research Center)}
}

\author{\IEEEauthorblockN{Yoon Huh\IEEEauthorrefmark{1}\IEEEauthorrefmark{2}, Bumjun Kim\IEEEauthorrefmark{1}\IEEEauthorrefmark{2}, and Wan Choi\IEEEauthorrefmark{1}\IEEEauthorrefmark{2}}
\IEEEauthorblockA{\IEEEauthorrefmark{1}Department of Electrical and Computer Engineering, Seoul National University, Seoul 08826, South Korea}
\IEEEauthorblockA{\IEEEauthorrefmark{2}Institute of New Media and Communications, Seoul National University, Seoul 08826, South Korea}
\{mnihy621, eithank96, wanchoi\}@snu.ac.kr
}

\maketitle

\begin{abstract}
This paper proposes a differentially private federated learning (FL) framework built upon an FL algorithm with semantic feature reconstruction (FedSFR) for training semantic communication modules for image transmission. By allowing clients with unfavorable uplink capacity to transmit low-dimensional semantic feature vectors extracted from locally trained joint source-channel coding (JSCC) encoders, FedSFR enhances communication efficiency and training stability under heterogeneous wireless conditions. To protect client privacy, we incorporate the oneshot Laplace mechanism and theoretically demonstrate that feature-based transmission achieves strictly stronger differential privacy (DP) guarantees than gradient-based transmission under an identical communication budget. In addition, a model selection mechanism is introduced to alleviate performance degradation caused by privacy-preserving perturbations. Experimental results on multiple datasets show that the proposed DP-aided FedSFR outperforms DP-enabled FedAvg in training stability and image reconstruction quality in heterogeneous wireless systems.
\end{abstract}

\begin{IEEEkeywords}
differential privacy, feature reconstruction, federated learning, semantic communication, image transmission.
\end{IEEEkeywords}

\vspace{-5mm}
\section{Introduction}\label{sec:introduction}
Semantic communication has emerged as a key paradigm for task-oriented transmission in sixth-generation (6G) wireless systems, aiming to convey task-relevant information rather than raw bit streams \cite{luo2022semantic}. For robust image transmission, deep learning (DL)–based joint source–channel coding (JSCC) frameworks have been widely studied, where neural network (NN)–based autoencoders jointly optimize source and channel coding over noisy wireless channels \cite{huh2025universal}.

To address the dynamic nature of real-world data distributions while preserving data privacy, federated learning (FL) has emerged as an effective solution for continuously updating semantic communication modules across distributed agents \cite{sun2024federated, xu2024federated}. In our previous work, we proposed an FL algorithm with semantic feature reconstruction (FedSFR), a communication-efficient FL framework in which clients experiencing poor channel conditions transmit semantic feature vectors instead of local gradient updates \cite{huh2025feature, huh2026federated}. By combining gradient aggregation from clients with better channel conditions with server-side feature reconstruction (FR) learning, FedSFR significantly \textit{accelerates convergence} and \textit{enhances training stability} in heterogeneous wireless environments. Within this framework, semantic communication agents operate as FL clients and update their JSCC encoder–decoder pair, i.e., the local model, through FL. During training, the parameter server (PS) aggregates local updates and refines the global model via FR. After training, however, inter-agent semantic communication is conducted exclusively among clients using the learned global model, without involvement of the PS.

Although FL avoids direct sharing of raw data, prior studies have shown that privacy leakage may still occur through model updates or learned representations \cite{kim2025privacy}. In FedSFR, feature vectors are generated from public data, so the input samples themselves do not pose a privacy risk. However, since the encoder used for feature generation is trained on private local data, the resulting feature representations may still implicitly reflect client-specific information.

To address this concern, this paper investigates the differential privacy (DP) \cite{dwork2014algorithmic} guarantees of FedSFR under an honest-but-curious (HBC) PS. DP is a rigorous and widely adopted framework for protecting sensitive data while preserving the utility of privatized information and an HBC PS is assumed to faithfully follow the FedSFR protocol while attempting to infer clients’ original data from uplink communications. Relatedly, the DP method in \cite{liu2024adaptive} employs PS-injected noise to defend against an external adversary rather than an HBC PS, and it neither targets communication-efficient FL nor exploits feature vectors in semantic communication. In contrast, FedSFR supports not only gradient-based updates, as in conventional FL, but also feature-based transmission. Notably, in the feature-based transmission, feature generation relies solely on the encoder of an autoencoder-based semantic communication module. As a result, only the JSCC encoder is subject to privatization, whereas the gradient-based update of FedSFR requires privatizing the entire model. Since DP guarantees are governed by the sensitivity of the privatized target, this structural distinction motivates a fundamental question: \textit{how does the choice of transmission strategy affect the achievable DP guarantees in FedSFR?}

To answer this question, we integrate the oneshot and Laplace mechanisms, dubbed the oneshot Laplace (OL) mechanism \cite{qiao2021oneshot}, for both transmission strategies and provide a rigorous theoretical comparison of their privacy guarantees. We mathematically prove that feature-based transmission achieves strictly stronger DP guarantees under identical communication constraints due to reduced parameter exposure. Furthermore, to mitigate performance degradation caused by DP noise, we propose a model selection (MS) algorithm that compares FR-assisted updates with gradient-only updates, and then selectively retains the update achieving better task performance.

The main contributions of this work are as follows:
\begin{itemize}
    \item We propose a differentially private FedSFR by incorporating the OL mechanism into both gradient-based and feature-based transmission strategies, enabling privacy-preserving training under heterogeneous uplink channels.

    \item We propose the MS algorithm at the PS that adaptively selects between FR-aided and gradient-only updates, improving training stability and yielding a more favorable optimization trajectory under DP noise.
    
    \item We establish formal DP guarantees for FedSFR under both sparse gradient and semantic feature transmission. We show that feature-based transmission allows strictly stronger privacy guarantees by reducing the effective parameter exposure under the same uplink budget.

    \item Experimental results on two image datasets demonstrate that the proposed DP-aided FedSFR consistently outperforms DP-enabled FedAvg in terms of reconstruction quality, convergence behavior, and robustness under various privacy and channel conditions.
\end{itemize}

\vspace{-2mm}
\section{System Model}\label{sec:system model}

\subsection{Semantic Communication for Image Transmission}\label{subsec:semantic communication for image transmission}
Consider a JSCC-based semantic communication framework for image transmission, where the receiver aims to recover the transmitted image. The transmission procedure is summarized as follows. Given a source image $\mathbf{X} \in \mathbb{R}^{C \times H \times W}$, with $C$, $H$, and $W$ denoting the number of feature channels, height, and width, respectively, the transmitter maps $\mathbf{X}$ into a $d$-dimensional feature vector $\mathbf{y} \in \mathbb{R}^d$ through an encoder $f_{\boldsymbol{\theta}}$ parameterized by $\boldsymbol{\theta}$, expressed as $\mathbf{y} = f_{\boldsymbol{\theta}}(\mathbf{X})$.

To satisfy the transmission power constraint, assumed as $P = 1$, the encoded feature vector is normalized as $\tilde{\mathbf{y}} = \mathbf{y}/\|\mathbf{y}\|_2$ and conveyed over a Rayleigh fading channel with coefficient $h \sim \mathcal{CN}(0, 1)$. At the receiver, after channel equalization, the corrupted representation is processed by a decoder $f^{-1}_{\boldsymbol{\phi}}$, parameterized by $\boldsymbol{\phi}$, to generate the reconstructed image $\hat{\mathbf{X}} \in \mathbb{R}^{C \times H \times W}$, given by $\hat{\mathbf{X}} = f^{-1}_{\boldsymbol{\phi}}(\tilde{\mathbf{y}} + \mathbf{n}/h)$, where $\mathbf{n} \sim \mathcal{N}(\boldsymbol{0}_d, \sigma^2 \mathbf{I}_d)$ denotes the Gaussian noise vector. The average signal-to-noise ratio (SNR) $\gamma$ is defined as $1/\sigma^2$.

Let $\boldsymbol{w} = \{\boldsymbol{\theta}, \boldsymbol{\phi}\} \in \mathbb{R}^N$ collect all trainable parameters of the JSCC model, where $N$ represents the total number of NN parameters. In the FL setting, these parameters are optimized by minimizing the global objective function at the PS, defined as $F(\boldsymbol{w}) = \frac{1}{|\mathcal{D}|} \sum_{\mathbf{X} \in \mathcal{D}} l_c(\boldsymbol{w}; \mathbf{X})$, where $\mathcal{D}$ denotes the image dataset. Here, the sample-wise loss function is chosen as $l_c(\boldsymbol{w}; \mathbf{X}) = \mathsf{MSE}(\hat{\mathbf{X}}, \mathbf{X})$, which measures the mean squared error (MSE) between the reconstructed image $\hat{\mathbf{X}}$ and the ground-truth image $\mathbf{X}$.

\vspace{-2mm}
\subsection{Federated Learning with Semantic Feature Reconstruction}\label{subsec:federated learning with semantic feature reconstruction}

\begin{figure}[!t]
    \centering
    \includegraphics[width=0.9\linewidth]{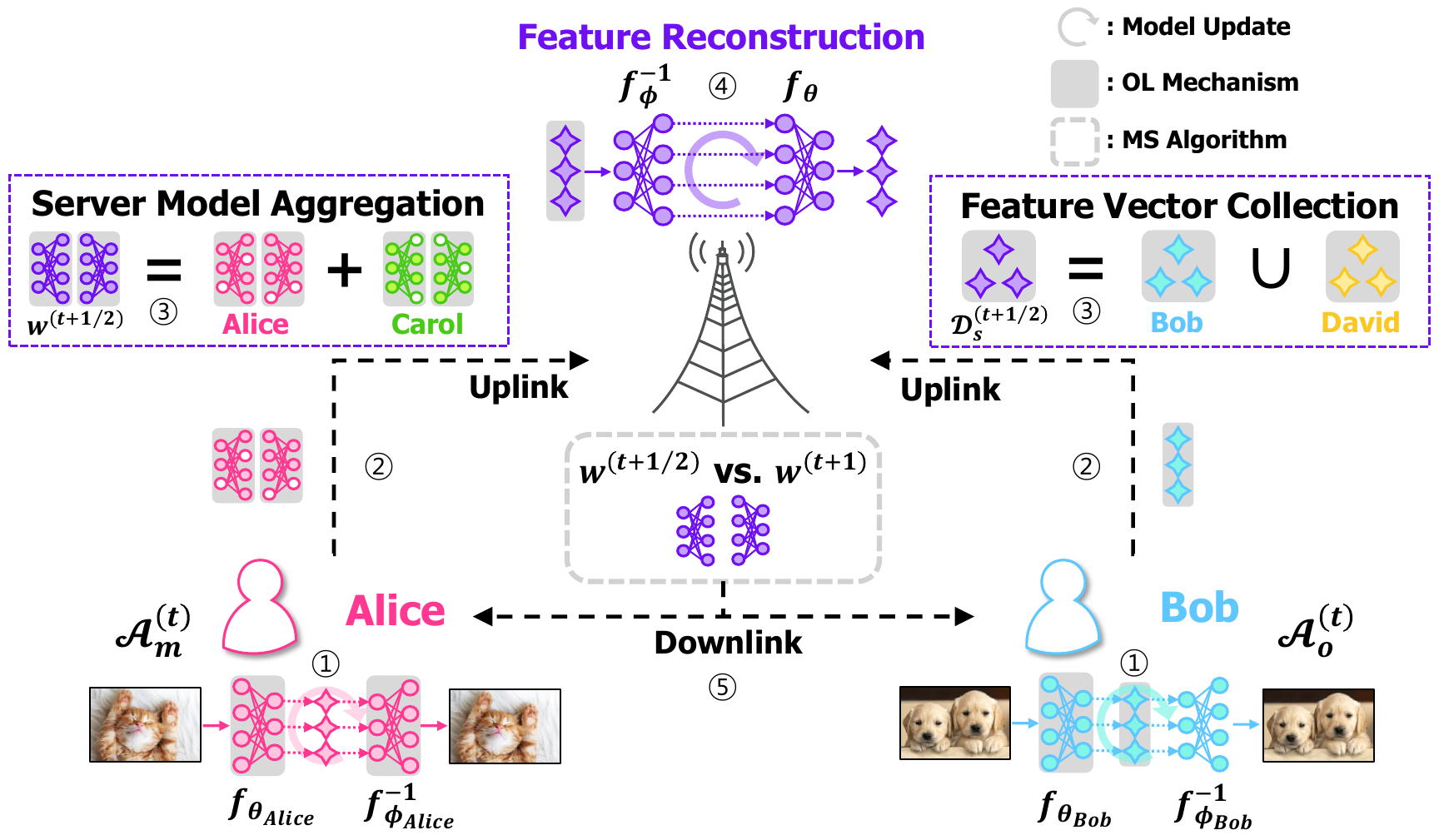}
    \vspace{-3mm}
    \caption{Overall procedure of FedSFR with the numbered algorithmic steps. Gray boxes indicate the proposed OL mechanism and MS algorithm for DP.}
    \label{fig:Overall procedure of FedSFR with the numbered algorithmic steps}
    \vspace{-6mm}
\end{figure}

We consider a wireless FedSFR system \cite{huh2025feature}, which departs from conventional FL schemes that rely solely on gradient-based uplink transmission. The PS interacts with a set of $K$ clients indexed by $\mathcal{A}$ with $|\mathcal{A}| = K$. Each client $k\in\mathcal{A}$ maintains a local dataset $\mathcal{D}_k$, while $\mathcal{D} = \bigcup_{k\in\mathcal{A}}\mathcal{D}_k$ is the global dataset. The local objective function at client $k$ is defined as $F_k(\boldsymbol{w}) = \frac{1}{|\mathcal{D}_k|}\sum_{\mathbf{X}\in\mathcal{D}_k} l_c(\boldsymbol{w}; \mathbf{X})$, whereas the global objective function at the PS is expressed as $F(\boldsymbol{w}) = \sum_{k\in\mathcal{A}}p_k F_k(\boldsymbol{w})$, where $p_k = |\mathcal{D}_k| / |\mathcal{D}|$ denotes the relative data contribution of client $k$. The overall FedSFR workflow is illustrated in Fig. \ref{fig:Overall procedure of FedSFR with the numbered algorithmic steps}, including five algorithmic \texttt{steps}.

In FedSFR, clients with unfavorable uplink capacity upload encoder output feature vectors extracted from their locally updated JSCC encoders instead of transmitting local model (i.e., JSCC module) updates. Since the feature dimension is much smaller than the number of model parameters, i.e., $d \ll N$, feature-based transmission enables the clients with limited uplink capacity to convey more informative representations of their local model states than sparsified local updates under the same communication budget. At each global iteration $t$, the participating clients $\mathcal{A}^{(t)} \subset \mathcal{A}$ are partitioned into two disjoint sets, $\mathcal{A}_m^{(t)}$ and $\mathcal{A}_o^{(t)}$, where clients in $\mathcal{A}_m^{(t)}$, experiencing relatively reliable channel conditions, transmit local \underline{\textit{m}}odel updates, whereas the clients in $\mathcal{A}_o^{(t)}$, suffering from poorer channel quality, upload encoder \underline{\textit{o}}utput feature vectors. Accordingly, we have $\mathcal{A}_m^{(t)} \cup \mathcal{A}_o^{(t)} = \mathcal{A}^{(t)}$, with $|\mathcal{A}_m^{(t)}| = K_m$ and $|\mathcal{A}_o^{(t)}| = K_o$. Following \cite{zhao2018federated}, features are extracted from a shared public dataset $\mathcal{P}_k \subset \mathcal{D}_k$ to avoid direct leakage of local private data at the PS, while the local model is trained using the entire local dataset, including both private and public samples\footnote{If the channel conditions remain poor, some clients only transmit feature vectors generated from the public dataset, raising concerns about imbalanced influence of private data across clients, akin to heterogeneous local data distributions. However, unlike conventional FL for classification, our prior FL results on reconstruction under non-independent and identically distributed (IID) local datasets are similar to those under IID settings \cite{huh2026federated}, indicating that the bias caused by clients with persistently poor channels is negligible.}. Notably, the transmitted representations thus serve as a medium for conveying local model knowledge.

During the local update, each client $k$ downloads the current global model from the PS, denoted by $\boldsymbol{w}_k^{\scriptstyle(t, 0)} = \boldsymbol{w}^{\scriptstyle(t)} \in \mathbb{R}^N$. The client then performs local model optimization using its local dataset $\mathcal{D}_k$ via a mini-batch stochastic gradient descent (SGD) algorithm (\texttt{step 1}). The local update rule follows
\begin{align}\label{eq:local iteration}
    \boldsymbol{w}_k^{\scriptstyle(t, e + 1)} 
    = \boldsymbol{w}_k^{\scriptstyle(t, e)} 
    - \eta_c^{\scriptstyle(t)} 
    \nabla F_k^{\scriptstyle(t, e)}(\boldsymbol{w}_k^{\scriptstyle(t, e)}),
\end{align}
where $e \in \{0, \dots, E_c - 1\}$ denotes the local iteration index, $\eta_c^{\scriptstyle(t)}$ is the local learning rate at global round $t$, $E_c$ is the total number of local update steps, and $\nabla F_k^{\scriptstyle(t, e)}(\boldsymbol{w}_k^{(t, e)})$ is the stochastic gradient estimator. To compute this gradient estimator at each local iteration $e$, the client $k$ draws a mini-batch $\mathcal{D}_k^{(t, e)} \subset \mathcal{D}_k$ and averages the sample-wise gradients of $l_c(\boldsymbol{w}_k^{(t, e)};\mathbf{X})$ over $\mathcal{D}_k^{(t, e)}$.

For clients in $\mathcal{A}_m^{\scriptstyle(t)}$, local model updates are communicated using a top-$S$ sparsification scheme\footnote{Due to the limited uplink capacity between the PS and each client, the amount of transmitted update information must be constrained to guarantee error-free communication within the required delay $T$. Accordingly, the sparsification level is determined as $S_k \!=\! \lfloor\frac{C_k\cdot T\cdot R}{Q}\rfloor$, where $C_k$, $R$, and $Q$ are the channel capacity of client $k$, code rate, and quantization bits, respectively. $S$ may vary across clients depending on their channel conditions. The top-$S$ sparsification selects the $S$ entries with the largest magnitudes in a vector, i.e., $S_k$ for client $k$, and is applied on a per-layer basis for the NN model.} with error feedback strategy \cite{amiri2020federated}. Specifically, the local update information at global iteration $t$ is given by
\begin{align}\label{eq:sparsification}
    \mathbf{g}_k^{\scriptstyle(t)} = \textsf{Sparse}(\mathbf{m}_k^{\scriptstyle(t)} + \eta_c^{\scriptstyle(t)}\sum\nolimits_{e = 0}^{E_c - 1}\nabla F_k^{\scriptstyle(t, e)}(\boldsymbol{w}_k^{\scriptstyle(t, e)})),
\end{align}
where $\textsf{Sparse}(\cdot)$ denotes the top-$S$ sparsification operator and $\mathbf{m}_k^{\scriptstyle(t)} \in \mathbb{R}^N$ is the error memory maintained at client $k$. The error memory is updated as
\begin{align}\label{eq:error memory update}
    \mathbf{m}_k^{\scriptstyle(t + 1)} = \mathbf{m}_k^{\scriptstyle(t)} + \eta_c^{\scriptstyle(t)}\sum\nolimits_{e = 0}^{E_c - 1}\nabla F_k^{\scriptstyle(t, e)}(\boldsymbol{w}_k^{\scriptstyle(t, e)}) - \mathbf{g}_k^{\scriptstyle(t)},
\end{align}
which compensates for information loss via \eqref{eq:sparsification}. With the sparsification level $S_k$ larger than the threshold $S_\mathsf{th}$, each client $k \in \mathcal{A}_m^{\scriptstyle(t)}$ uploads the local update $\mathbf{g}_k^{\scriptstyle(t)}$ to the PS via the uplink channel (\texttt{step 2}).

In contrast, the clients in $\mathcal{A}_o^{\scriptstyle(t)}$ upload a feature set $\mathcal{Y}_k^{\scriptstyle(t)}$ via the uplink, where each feature vector $\mathbf{y}$ is generated by the locally updated JSCC encoder $f_{\boldsymbol{\theta}}$ with $\boldsymbol{\theta} = \boldsymbol{\theta}_k^{\scriptstyle(t, E_c)}$ using $\mathcal{P}_k$ (\texttt{step 2}). Their uplink budget $S_k$\footnote{$S_k$ denotes the uplink payload size of client $k$. For $k\in\mathcal{A}_o^{\scriptstyle(t)}$, it determines the number of transmitted feature vectors, while for $k\in\mathcal{A}_m^{\scriptstyle(t)}$ it corresponds to the sparsification level for gradient transmission.} is smaller than $S_\mathsf{th}$, limiting the number of transmitted feature vectors.

During the global update, the PS first aggregates the received local updates and forms a feature-based dataset $\mathcal{D}_s^{\scriptstyle(t+\frac{1}{2})} = \bigcup_{k\in\mathcal{A}_o^{\scriptstyle(t)}} \mathcal{Y}_k^{\scriptstyle(t)}$ by collecting the transmitted features (\texttt{step 3}). The intermediate global model is computed as
\begin{align}\label{eq:global aggregation}
    \boldsymbol{w}^{\scriptstyle(t + \frac{1}{2})} = \boldsymbol{w}^{\scriptstyle(t)} - \frac{K}{K_m}\sum\nolimits_{k\in\mathcal{A}_m^{\scriptstyle(t)}}p_k \mathbf{g}_k^{\scriptstyle(t)}.
\end{align}
Starting from $\boldsymbol{w}_s^{\scriptstyle(t + \frac{1}{2}, 0)} = \boldsymbol{w}^{\scriptstyle(t + \frac{1}{2})}$, the PS further refines the global model by performing FR using $\mathcal{D}_s^{\scriptstyle(t + \frac{1}{2})}$ (\texttt{step 4}). Through FR, information from the clients in $\mathcal{A}_o^{\scriptstyle(t)}$ is implicitly integrated into the global model without explicitly transmitting their local updates. Accordingly, the error memory of the client $k \in \mathcal{A}_o^{\scriptstyle(t)}$ is reset as $\mathbf{m}_k^{\scriptstyle(t + 1)} = \boldsymbol{0}_N$ after the server-side update.

The JSCC decoder and encoder at the PS sequentially process a feature sample $\mathbf{y}$ in $\mathcal{D}_s^{\scriptstyle(t + \frac{1}{2})}$ to produce its reconstruction $\hat{\mathbf{y}}\in\mathbb{R}^{d}$, given by $\hat{\mathbf{y}} = f_{\boldsymbol{\theta}}(f^{-1}_{\boldsymbol{\phi}}(\Tilde{\mathbf{y}} + \mathbf{n}))$, where $\boldsymbol{\theta} = \boldsymbol{\theta}_s^{\scriptstyle(t + \frac{1}{2}, e)}$, $\boldsymbol{\phi} = \boldsymbol{\phi}_s^{\scriptstyle(t + \frac{1}{2}, e)}$, $\mathbf{n} \sim \mathcal{N}(\boldsymbol{0}_d, \sigma^2\mathbf{I}_d)$, and $\Tilde{\mathbf{y}} = \mathbf{y}/||\mathbf{y}||_2$. Using the same SGD method as adopted at the clients, the server update is expressed as
\begin{align}\label{eq:server iteration}
    \boldsymbol{w}_s^{\scriptstyle(t + \frac{1}{2}, e + 1)} = \boldsymbol{w}_s^{\scriptstyle(t + \frac{1}{2}, e)} - \eta_s^{\scriptstyle(t)}\nabla F_s^{\scriptstyle(t + \frac{1}{2}, e)}(\boldsymbol{w}_s^{\scriptstyle(t + \frac{1}{2}, e)}),
\end{align}
where $e\in\{0, \dots, E_s - 1\}$ denotes the server iteration index, $E_s$ is the total number of server updates, $\eta_s^{\scriptstyle(t)}$ is the server learning rate at global iteration $t$, and $\nabla F_s^{\scriptstyle(t + \frac{1}{2}, e)}(\boldsymbol{w}_s^{\scriptstyle(t + \frac{1}{2}, e)})$ is the stochastic gradient estimator. To obtain this gradient estimator at each server iteration $e$, the PS samples a mini-batch $\mathcal{D}_s^{\scriptstyle(t + \frac{1}{2}, e)}\subset\mathcal{D}_s^{\scriptstyle(t + \frac{1}{2})}$ and averages the sample-wise gradients of the reconstruction loss $l_s(\boldsymbol{w};\mathbf{y}) = \mathsf{MSE}(\hat{\mathbf{y}},\mathbf{y})$ over $\mathcal{D}_s^{(t+\frac{1}{2}, e)}$. Finally, the PS broadcasts the global model $\boldsymbol{w}^{\scriptstyle(t + 1)} = \boldsymbol{w}_s^{\scriptstyle(t + \frac{1}{2}, E_s)}$ to all clients (\texttt{step 5}).

\vspace{-2mm}
\section{Differentially Private FedSFR}\label{sec:differentially private fedsfr}

\subsection{Oneshot Laplace Mechanism}\label{subsec:oneshot laplace mechanism}
Among various variants of DP, the oneshot and Laplace mechanisms are particularly well suited for top-$S$ selection and wireless communication settings due to their computational efficiency and simplicity \cite{dwork2014algorithmic, qiao2021oneshot}.  Specifically, the oneshot mechanism first privately selects the top-$S$ indices without revealing their true ordering, and the Laplace mechanism then perturbs the selected values to conceal their exact magnitudes.

Let $\mathbb{D}$ denote the data universe, and let $\mathbb{D}^m$ be the set of datasets consisting of $m\in\mathbb{Z}$ records. We say that two datasets $\mathcal{D}, \mathcal{D}' \in \mathbb{D}^m$ are adjacent if they differ in exactly one record. For a fixed positive integer $n$, consider a query function $\mathcal{Q}: \mathbb{D}^m\to\mathbb{R}^{n}$. The sensitivity of this query is defined as the maximum change in its output, when evaluated on any pair of adjacent datasets, given by $\Delta = \max\|\mathcal{Q}(\mathcal{D}) - \mathcal{Q}(\mathcal{D}')\|_1$.

\begin{definition}[Differential privacy \cite{dwork2014algorithmic}]
For a privacy parameter $\epsilon > 0$, a randomized mechanism $\mathcal{M}:\mathbb{R}^n \to \mathbb{R}^n$ satisfies $\epsilon$-DP if, for all adjacent databases $\mathcal{D}, \mathcal{D}'$ and every measurable set $J\subseteq \mathbb{R}^n$, $\Pr[\mathcal{M}(\mathcal{Q}(\mathcal{D}))\in J] \leq e^\epsilon \Pr[\mathcal{M}(\mathcal{Q}(\mathcal{D}'))\in J]$ holds.
\end{definition}

\begin{prop}[Oneshot mechanism \cite{qiao2021oneshot}]\label{pro:oneshot mechanism}
\emph{Oneshot mechanism $\mathcal{M}_{os}$ adds Laplace noise $\mathcal{L}_1 \sim \mathrm{Lap}(0, \sigma_1)^n$ to a query $\mathcal{Q}$ and returns the top-$S$ indices, achieving $\epsilon$-DP with $\epsilon = \frac{2S\Delta}{n\sigma_1}$.}
\end{prop}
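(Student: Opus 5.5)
The plan is to treat $\mathcal{M}_{os}$ as releasing only the \emph{set} of top-$S$ noisy indices, and to prove the likelihood-ratio bound by a coupling (noise-shifting) argument on the Laplace noise, rather than by invoking the Laplace mechanism on the whole vector. The naive route would privatize the full noisy vector $\mathcal{Q}(\mathcal{D})+\mathcal{L}_1$, which is $\Delta/\sigma_1$-DP, and then apply post-processing. That bound carries no factor $S/n$, so a finer argument is needed.

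First I will fix the per-coordinate sensitivity. Let $\delta_\infty=\max_i\max|\mathcal{Q}_i(\mathcal{D})-\mathcal{Q}_i(\mathcal{D}')|$, the largest change of any single coordinate over adjacent datasets. The stated constant $\frac{2S\Delta}{n\sigma_1}$ corresponds to $\delta_\infty=\Delta/n$, i.e., the $\ell_1$ sensitivity spread evenly over the $n$ coordinates. This is the regime of \cite{qiao2021oneshot}, where the coordinatewise bound is the primitive quantity, so I will adopt $\delta_\infty\le \Delta/n$ as the operating assumption. I expect this to be the main obstacle. A general $\ell_1$ bound does not imply it, since a single coordinate may absorb the whole change $\Delta$. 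The proof therefore has to state explicitly that the coordinatewise bound is the assumption being used, as in the cited work.

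The core step is as follows. Fix an output index set $I$ with $|I|=S$, and write $q=\mathcal{Q}(\mathcal{D})$, $q'=\mathcal{Q}(\mathcal{D}')$, with noise vector $\nu\sim\mathrm{Lap}(0,\sigma_1)^n$. The output equals $I$ exactly when $\min_{i\in I}(q_i+\nu_i)>\max_{j\notin I}(q_j+\nu_j)$. I define the shift map $T(\nu)_i=\nu_i+2\delta_\infty$ for $i\in I$ and $T(\nu)_j=\nu_j$ for $j\notin I$. Suppose $\nu$ produces output $I$ on $\mathcal{D}$. Then each $q'_i+T(\nu)_i\ge q_i+\nu_i+\delta_\infty$ for $i\in I$. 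Likewise each $q'_j+\nu_j\le q_j+\nu_j+\delta_\infty$ for $j\notin I$. Hence $T(\nu)$ produces output $I$ on $\mathcal{D}'$.

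It remains to compare densities. Since $T$ is a translation, its Jacobian is $1$. For each shifted coordinate, $p(\nu_i)/p(\nu_i+2\delta_\infty)\le e^{2\delta_\infty/\sigma_1}$. Multiplying over the $S$ shifted coordinates gives $\Pr[\mathcal{M}_{os}(q)=I]\le e^{2S\delta_\infty/\sigma_1}\Pr[\mathcal{M}_{os}(q')=I]$. Summing over the sets $I$ in any event $J$ extends this to arbitrary events, because the output space is finite. Substituting $\delta_\infty=\Delta/n$ yields $\epsilon=\frac{2S\Delta}{n\sigma_1}$. Ties among noisy values occur with probability zero and can be ignored.
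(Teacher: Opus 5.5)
Your argument is correct, and it takes a genuinely different route from the paper, because the paper gives no proof: Proposition~\ref{pro:oneshot mechanism} is simply imported from \cite{qiao2021oneshot}. Your noise-shift coupling works as follows: raise the noise on the $S$ coordinates of $I$ by $2\delta_\infty$, check that the event that the output equals $I$ carries over from $\mathcal{Q}(\mathcal{D})$ to $\mathcal{Q}(\mathcal{D}')$, and pay a factor $e^{2\delta_\infty/\sigma_1}$ per shifted coordinate. This is the elementary pure-DP argument, and it gives $\epsilon=2S\delta_\infty/\sigma_1$ in a few lines. It relies on only the unordered set being released, since a uniform shift of $I$ need not preserve the order within $I$; this matches the paper's description of $\mathcal{M}_{os}$. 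What your route buys is transparency about the constant. The oneshot analysis is naturally phrased with a per-coordinate sensitivity, which is exactly your $\delta_\infty$, and the citation does not supply the $1/n$ normalization that appears in the statement.

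Your extra hypothesis $\delta_\infty\le\Delta/n$ is not an artifact of your method. It cannot be dropped: with $\Delta$ the $\ell_1$ sensitivity as the paper defines it, the statement is false. Take $S=1$, any $n\ge 3$, $\mathcal{Q}(\mathcal{D})=(0,0,-M,\dots,-M)$ and $\mathcal{Q}(\mathcal{D}')=(\Delta,0,-M,\dots,-M)$, so the $\ell_1$ distance is $\Delta$. As $M\to\infty$, the probability of outputting $\{2\}$ tends to $\frac12$ under $\mathcal{D}$. Under $\mathcal{D}'$ it tends to $\Pr[\nu_2-\nu_1>\Delta]=\frac12\bigl(1+\frac{\Delta}{2\sigma_1}\bigr)e^{-\Delta/\sigma_1}$. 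Hence the log-ratio approaches $\frac{\Delta}{\sigma_1}-\log\bigl(1+\frac{\Delta}{2\sigma_1}\bigr)>0$, independently of $n$, while the claimed $\frac{2\Delta}{n\sigma_1}$ tends to $0$ as $n$ grows.

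So you should state the result as $\epsilon=2S\delta_\infty/\sigma_1$ and include such an example to justify the hypothesis. This also matters downstream. The proof of Theorem~\ref{thm:differential privacy} plugs the $\ell_1$ bounds $2G^{(t)}$ and $G^{(t)}$ into the proposition, so it tacitly assumes that every coordinate of the (encoder) update changes by at most $2G^{(t)}/N$. In your formulation, the factor-two gain of case (ii) in $\epsilon_1$ then comes purely from selecting $S/2$ rather than $S$ indices.
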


\begin{prop}[Laplace mechanism \cite{dwork2014algorithmic}]\label{pro:laplace mechanism}
\emph{For a query $\mathcal{Q}$, the Laplace mechanism is defined as $\mathcal{M}_L(\mathcal{Q}(\mathcal{D})) \triangleq \mathcal{Q}(\mathcal{D}) + \mathcal{L}_2$, where $\mathcal{L}_2 \sim \mathrm{Lap}(0, \sigma_2)^n$, achieving $\epsilon$-DP with $\epsilon = \frac{\Delta}{\sigma_2}$.}
\end{prop}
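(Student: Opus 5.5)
The plan is the standard density-ratio argument for the Laplace mechanism. Fix adjacent datasets $\mathcal{D},\mathcal{D}'$ and write $\mathbf{a}=\mathcal{Q}(\mathcal{D})$ and $\mathbf{b}=\mathcal{Q}(\mathcal{D}')$. Since $\mathcal{L}_2$ has i.i.d. $\mathrm{Lap}(0,\sigma_2)$ coordinates, the output $\mathcal{M}_L(\mathcal{Q}(\mathcal{D}))$ has density
\[
p_{\mathcal{D}}(\mathbf{z})=\prod_{i=1}^{n}\frac{1}{2\sigma_2}\exp\!\left(-\frac{|z_i-a_i|}{\sigma_2}\right),
\]
and $p_{\mathcal{D}'}$ is the same expression with $\mathbf{b}$ in place of $\mathbf{a}$. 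First I will bound the pointwise ratio of these two densities. Then I will integrate that bound over an arbitrary measurable set $J$.

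For the ratio, the normalizing constants cancel, so
\[
\frac{p_{\mathcal{D}}(\mathbf{z})}{p_{\mathcal{D}'}(\mathbf{z})}=\exp\!\left(\frac{1}{\sigma_2}\sum_{i=1}^{n}\bigl(|z_i-b_i|-|z_i-a_i|\bigr)\right).
\]
By the reverse triangle inequality, each term satisfies $|z_i-b_i|-|z_i-a_i|\le |a_i-b_i|$. Summing over $i$ gives $\|\mathbf{a}-\mathbf{b}\|_1$, which is at most $\Delta$ by the definition of sensitivity. Hence $p_{\mathcal{D}}(\mathbf{z})\le e^{\Delta/\sigma_2}\,p_{\mathcal{D}'}(\mathbf{z})$ for every $\mathbf{z}\in\mathbb{R}^n$.

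Integrating this pointwise inequality over any measurable $J\subseteq\mathbb{R}^n$ yields $\Pr[\mathcal{M}_L(\mathcal{Q}(\mathcal{D}))\in J]\le e^{\Delta/\sigma_2}\Pr[\mathcal{M}_L(\mathcal{Q}(\mathcal{D}'))\in J]$. This is exactly the condition of the DP definition with $\epsilon=\Delta/\sigma_2$.

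There is no real obstacle here. The only point needing care is that the sensitivity must be measured in the $\ell_1$ norm, as in the paper's definition of $\Delta$, so that it matches the $\ell_1$ structure of the product Laplace density. With that match, the per-coordinate triangle-inequality bounds sum directly to $\Delta$.
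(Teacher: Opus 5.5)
Your proof is correct and is the standard density-ratio argument for the Laplace mechanism (Dwork--Roth, Theorem 3.6). The paper does not prove this proposition itself; it relies on that citation. The argument bounds $p_{\mathcal{D}}(\mathbf{z})\le e^{\|\mathbf{a}-\mathbf{b}\|_1/\sigma_2}\,p_{\mathcal{D}'}(\mathbf{z})$ pointwise via the coordinatewise triangle inequality, uses the $\ell_1$ sensitivity to replace $\|\mathbf{a}-\mathbf{b}\|_1$ by $\Delta$, and integrates over $J$.
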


Based on these foundations, we describe two distinct transmission strategies under an identical uplink communication budget to enable privacy-preserving uplink transmission: (i) the clients in $\mathcal{A}_m^{\scriptstyle(t)}$ transmit the local update information $\mathbf{g}_k^{\scriptstyle(t)}$ and (ii) the clients in $\mathcal{A}_m^{\scriptstyle(t)}$ transmit the feature vectors $\mathcal{Y}_k^{\scriptstyle(t)}$.

\begin{enumerate}[label=(\roman*)]
    \item \textbf{Gradient-based Transmission with DP}: Each client $k$ transmits both the sparsified update values $\{[\mathbf{g}_k^{\scriptstyle(t)}]_{i_s}\}_{s = 1}^{S_k}$ and the corresponding indices of the top-$S_k$ selected elements $\{i_s\}_{s = 1}^{S_k}$, as in \cite{qiao2021oneshot}, where $i_s\in[1{:}N]$ and $[\mathbf{g}_k^{\scriptstyle(t)}]_{i_s}$ denote the $i_s$-th entry of $\mathbf{g}_{k}^{\scriptstyle(t)}$. To achieve DP, a two-stage mechanism is employed. First, the oneshot mechanism $\mathcal{M}_{\text{os}}$ is applied to privately select the top-$S_k$ indices from the Laplace-perturbed local updates $\mathbf{g}_k^{\scriptstyle(t)} + \boldsymbol{\ell}_{k, 1}^{\scriptstyle(t)}$, where $\boldsymbol{\ell}_{k, 1}^{\scriptstyle(t)} \sim \mathrm{Lap}(0, \sigma_1)^N$. Then, to obscure the magnitudes of the selected updates, fresh independent Laplace noise is added via the Laplace mechanism $\mathcal{M}_L$, yielding $[\mathbf{g}_k^{\scriptstyle(t)}]_{i_s} + [\boldsymbol{\ell}_{k, 2}^{\scriptstyle(t)}]_s$ for $s\in[1{:}S_k]$, where $[\boldsymbol{\ell}_{k, 2}^{\scriptstyle(t)}]_s$ denotes the $s$-th entry of $\boldsymbol{\ell}_{k, 2}^{\scriptstyle(t)} \sim \mathrm{Lap}(0, \sigma_2)^{S_k}$.

    \item \textbf{Feature-based Transmission with DP}: Unlike case (i), $\mathcal{M}_{\text{os}}$ and $\mathcal{M}_L$ are applied only to the encoder-side local updates. Hence, the encoder parameters are privatized prior to feature generation. Since the encoder contains about half of the parameters of the full JSCC model, we assume that it consists of $N/2$ parameters. The client $k$ privately selects the top-$(S_k/2)$ indices $\{i_s\}_{s = 1}^{S_k/2}$ from the Laplace-perturbed encoder updates $\mathbf{g}_{k, \boldsymbol{\theta}}^{\scriptstyle(t)} + \boldsymbol{\ell}_{k, 1}^{\scriptstyle(t)}$, and then corrupts their magnitudes as $[\mathbf{g}_{k, \boldsymbol{\theta}}^{\scriptstyle(t)}]_{i_s} + [\boldsymbol{\ell}_{k, 2}^{\scriptstyle(t)}]_s$ for $s\in[1{:}S_k/2]$. Here, $\mathbf{g}_{k, \boldsymbol{\theta}}^{\scriptstyle(t)}\in\mathbb{R}^{N/2}$ denotes the encoder-side local updates, with $[\mathbf{g}_{k, \boldsymbol{\theta}}^{\scriptstyle(t)}]_{i_s}$ representing its $i_s$-th entry, $\boldsymbol{\ell}_{k, 1}^{\scriptstyle(t)} \sim \mathrm{Lap}(0, \sigma_1)^{N/2}$ and $\boldsymbol{\ell}_{k, 2}^{\scriptstyle(t)} \sim \mathrm{Lap}(0, \sigma_2)^{S_k/2}$.
\end{enumerate}

We collectively refer to the privacy mechanisms used in cases (i) and (ii) as the OL mechanism. As illustrated in Fig. \ref{fig:Overall procedure of FedSFR with the numbered algorithmic steps}, the key distinction lies in the privatization target: (i) OL is applied to the full JSCC model; (ii) it is imposed only on the JSCC encoder parameters prior to feature generation.

\vspace{-1mm}
\subsection{Model Selection Algorithm}\label{subsec:model selection algorithm}
The FR process in FedSFR enables the global JSCC model to incorporate client-specific characteristics by leveraging feature vectors that implicitly encode local update information rather than explicit gradients. However, under the OL mechanism, the benefits of FR, including fast convergence and stable training, may be reduced. This limitation arises because the transmitted feature vectors are generated by a locally perturbed JSCC encoder, whose parameters have already been corrupted by injected noise. Consequently, reconstructing these noisy feature representations can drive the PS toward a suboptimal global model compared to learning from clean features produced by an unperturbed encoder.

To mitigate this issue, we propose a novel MS algorithm\footnote{Although this issue can be alleviated by proper control of $\sigma_1$ and $\sigma_2$, the resulting performance variations are difficult to characterize explicitly. Moreover, the MS algorithm is compatible with such noise-control strategies.} that compares the image reconstruction performance of the intermediate global model $\boldsymbol{w}^{\scriptstyle(t + \frac{1}{2})}$ and the FR-aided global model $\boldsymbol{w}^{\scriptstyle(t + 1)}$, as shown in Fig. \ref{fig:Overall procedure of FedSFR with the numbered algorithmic steps}. Since the PS has access to a shared public dataset $\mathcal{P} = \bigcup_{k\in\mathcal{A}}\mathcal{P}_k$, it can evaluate image reconstruction quality without accessing any private client data. Specifically, it measures the MSE of both models on the public dataset $\mathcal{P}$ and selects the one with lower error as the updated global model. This alleviates the negative impact of reconstructing the noisy features under the OL mechanism. By retaining the model with superior reconstruction at each iteration, the proposed MS encourages that the global training trajectory steadily improves image reconstruction, enhancing both training stability and robustness under privacy constraints.

\section{Differential Privacy Analysis}\label{sec:differential privacy analysis}
When clients transmit feature vectors extracted by a local JSCC encoder, fewer parameters are privatized compared to transmitting local model updates, which can offer inherent privacy advantages. To rigorously analyze the DP guarantees of the two private transmission strategies presented in Section \ref{subsec:oneshot laplace mechanism}, \textbf{Theorem \ref{thm:differential privacy}} is established by leveraging the following two properties together with the OL mechanism.

\begin{prop}[Composition property \cite{dwork2014algorithmic}]\label{prop:sequential mechanism}
\emph{Let $\mathcal{M}_1$ and $\mathcal{M}_2$ satisfy $\epsilon_1$-DP and $\epsilon_2$-DP, respectively. Then, the composed mechanism $\mathcal{M}_{2}\circ\mathcal{M}_{1}$ satisfies $(\epsilon_1 + \epsilon_2)$-DP.}
\end{prop}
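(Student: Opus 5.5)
We prove the sequential composition property directly from the definition of $\epsilon$-DP. Composing here means running $\mathcal{M}_1$ on the dataset's query and then running $\mathcal{M}_2$ with access to the output of $\mathcal{M}_1$ and fresh independent randomness. This matches the OL mechanism, where the Laplace stage uses the indices chosen by the oneshot stage. Accordingly, we model the composition as the pair $(o_1,o_2)$ with $o_1 \sim \mathcal{M}_1(\mathcal{D})$ and $o_2 \sim \mathcal{M}_2(\mathcal{D}; o_1)$. We assume that, for every fixed $o_1$, the map $\mathcal{D}\mapsto \mathcal{M}_2(\mathcal{D};o_1)$ is $\epsilon_2$-DP. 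The reported output of $\mathcal{M}_2\circ\mathcal{M}_1$ is then a post-processing of $(o_1,o_2)$ (e.g., the second coordinate, or the pair itself). It therefore suffices to bound the joint law of $(o_1,o_2)$.

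\textbf{Step 1 (densities).} First treat the case where both mechanisms have densities $p_1(\cdot\mid\mathcal{D})$ and $p_2(\cdot\mid\mathcal{D},o_1)$ with respect to a common base measure. This covers counting measure on index sets for $\mathcal{M}_{os}$ and Lebesgue measure for the Laplace perturbation. Pointwise $\epsilon$-DP for densities reads $p_1(o_1\mid\mathcal{D})\le e^{\epsilon_1}p_1(o_1\mid\mathcal{D}')$, and similarly for $p_2$. Multiplying the two bounds gives
\begin{align*}
p_1(o_1\mid\mathcal{D})\,p_2(o_2\mid\mathcal{D},o_1) \le e^{\epsilon_1+\epsilon_2}\, p_1(o_1\mid\mathcal{D}')\,p_2(o_2\mid\mathcal{D}',o_1).
\end{align*}
Integrating over any measurable $J$ yields $\Pr[(o_1,o_2)\in J]\le e^{\epsilon_1+\epsilon_2}\Pr[(o_1,o_2)'\in J]$.

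\textbf{Step 2 (general measurable sets).} To avoid assuming densities, we argue with conditional probabilities instead. Write $\Pr[(o_1,o_2)\in J] = \int \Pr[\mathcal{M}_2(\mathcal{D};o_1)\in J_{o_1}]\, d\mu_{\mathcal{D}}(o_1)$, where $J_{o_1}$ is the section of $J$ at $o_1$ and $\mu_{\mathcal{D}}$ is the law of $\mathcal{M}_1(\mathcal{D})$. Bound the inner probability by $e^{\epsilon_2}$ times its $\mathcal{D}'$ counterpart. Then apply the $\epsilon_1$-DP of $\mathcal{M}_1$ to the integral of the nonnegative function $h(o_1)=\Pr[\mathcal{M}_2(\mathcal{D}';o_1)\in J_{o_1}]\in[0,1]$. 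The inequality $\int h\,d\mu_{\mathcal{D}}\le e^{\epsilon_1}\int h\,d\mu_{\mathcal{D}'}$ follows from the set-wise DP inequality by approximating $h$ with simple functions (layer-cake representation $\int h\,d\mu=\int_0^1 \mu(\{h>u\})\,du$).

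\textbf{Step 3 (post-processing).} Finally, the output of $\mathcal{M}_2\circ\mathcal{M}_1$ is a measurable function of $(o_1,o_2)$. The preimage of any measurable set is measurable, so the bound from Steps 1--2 transfers directly and gives $(\epsilon_1+\epsilon_2)$-DP.

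The main obstacle is Step 2. It requires stating precisely that $\mathcal{M}_2$ is $\epsilon_2$-DP uniformly over every fixed value of its auxiliary input $o_1$, and it requires the layer-cake argument to pass from sets to bounded functions. For the OL mechanism, the density version in Step 1 suffices, since the index output is discrete and the Laplace noise has an explicit density.
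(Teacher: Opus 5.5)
Your argument is correct, and it goes further than the paper, which gives no proof of this proposition and simply cites Dwork--Roth. The textbook argument behind that citation works with discrete outputs: the probability of an output pair factors over the two stages, and each factor's ratio under adjacent datasets is bounded by $e^{\epsilon_1}$ and $e^{\epsilon_2}$. That is your Step~1 with counting measure.

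Your Steps~2--3 add two things that Theorem~\ref{thm:differential privacy} relies on but the bare citation leaves implicit. The first is adaptivity. In the OL mechanism, $\mathcal{M}_L$ perturbs exactly the entries at the indices returned by $\mathcal{M}_{os}$. Your hypothesis that $\mathcal{D}\mapsto\mathcal{M}_2(\mathcal{D};o_1)$ is $\epsilon_2$-DP for every fixed $o_1$ is what the paper's ``since we only consider the selected values'' sensitivity step is meant to supply. The second is continuous (Laplace-valued) outputs, which you handle through the kernel disintegration and the layer-cake bound rather than pointwise ratios.

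Your explicit model also resolves an ambiguity in the statement. Read literally, with $\mathcal{M}_2$ acting only on the output of $\mathcal{M}_1$, the composition is just post-processing and is already $\epsilon_1$-DP, so the hypothesis on $\mathcal{M}_2$ would play no role.

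One minor point about Step~1: the set-wise definition gives the pointwise density inequality only almost everywhere. To see this, apply the definition to the set where the inequality fails. Almost everywhere is all the integration requires, and for the OL mechanism the explicit discrete and Laplace densities satisfy the inequality everywhere.
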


\begin{prop}[Post-processing property \cite{dwork2014algorithmic}]\label{prop:post processing mechanism}
\emph{Let $\mathcal{M}$ satisfy $\epsilon$-DP. For any deterministic function $f$, the post-processed mechanism $f\circ\mathcal{M}$ also satisfies $\epsilon$-DP.}
\end{prop}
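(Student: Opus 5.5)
The plan is to reduce the claim directly to the defining inequality of $\epsilon$-DP in the Definition, by pulling the event back through $f$. Fix an arbitrary pair of adjacent datasets $\mathcal{D},\mathcal{D}'$ and an arbitrary measurable set $J$ in the output space of $f$. The key observation is that, because $f$ is deterministic, the event $\{f(\mathcal{M}(\mathcal{Q}(\mathcal{D})))\in J\}$ coincides exactly with the event $\{\mathcal{M}(\mathcal{Q}(\mathcal{D}))\in f^{-1}(J)\}$. The same identity holds with $\mathcal{D}'$ in place of $\mathcal{D}$. All the randomness therefore sits inside $\mathcal{M}$, and $f$ only relabels outcomes.

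The steps, in order, are as follows.

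First, I set $J' \triangleq f^{-1}(J)=\{\mathbf{u}: f(\mathbf{u})\in J\}$. I check that $J'$ is a measurable subset of $\mathbb{R}^n$; this requires $f$ to be (Borel) measurable, which I take as implicit in the statement.

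Second, I rewrite the probability for $\mathcal{D}$ as $\Pr[f(\mathcal{M}(\mathcal{Q}(\mathcal{D})))\in J]=\Pr[\mathcal{M}(\mathcal{Q}(\mathcal{D}))\in J']$.

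Third, I apply the $\epsilon$-DP guarantee of $\mathcal{M}$ from the Definition to the measurable set $J'$. This gives the bound $\le e^{\epsilon}\Pr[\mathcal{M}(\mathcal{Q}(\mathcal{D}'))\in J']$.

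Fourth, I undo the pullback to obtain $e^{\epsilon}\Pr[f(\mathcal{M}(\mathcal{Q}(\mathcal{D}')))\in J]$. Since $\mathcal{D},\mathcal{D}'$ and $J$ were arbitrary, $f\circ\mathcal{M}$ is $\epsilon$-DP.

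I do not expect any step to be a real obstacle; the argument is essentially one line. The only point needing care is formal. The Definition is stated for mechanisms $\mathbb{R}^n\to\mathbb{R}^n$, whereas $f$ may map into a different space, for instance index sets or model parameters, and ultimately feature vectors produced by the encoder. I would remark that the Definition extends verbatim to any measurable output space. I would also note that measurability of $f$ is needed so that $J'$ is an admissible event.

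If one wanted the result for a randomized $f$ whose internal randomness is independent of the data, I would first condition on that randomness, which reduces it to the deterministic case above. I would then integrate the resulting inequality against the law of that randomness, and the inequality is preserved. This extension is not required here, since the statement concerns deterministic $f$ only.
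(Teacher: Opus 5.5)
Your proof is correct and is the standard argument. The paper gives no proof of its own and simply cites Dwork and Roth, whose proof is this same pullback through $f^{-1}(J)$, with randomized $f$ handled as a mixture of deterministic maps as in your final remark.
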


\begin{thm}\label{thm:differential privacy}
\emph{Assume that each client $k\in\mathcal{A}^{\scriptstyle(t)}$ adopts one of the two DP-enabled transmission strategies with the OL mechanism: (i) transmitting local update information $\mathbf{g}_k^{\scriptstyle(t)}\in\mathbb{R}^N$, or (ii) transmitting the feature vectors $\mathcal{Y}_k^{\scriptstyle(t)}$. Under the oneshot and Laplace mechanism, utilizing $\mathrm{Lap}(0, \sigma_1)$ and $\mathrm{Lap}(0, \sigma_2)$, respectively, the option (ii) achieves a strictly stronger DP guarantee than the option (i), under an identical uplink communication budget $S$.}
\begin{IEEEproof}
\emph{For the global iteration $t$, assume that the local update information $\mathbf{g}_k^{\scriptstyle(t)}$, defined in \eqref{eq:sparsification}, has a bounded norm, i.e., $\|\mathbf{g}_k^{\scriptstyle(t)}\|_1\leq G^{\scriptstyle(t)}$ for all $k$. On the one hand, for the clients transmitting $\mathbf{g}_k^{\scriptstyle(t)}$, $\mathcal{M}_{os}$ is applied to select the top-$S$ indices and $\mathcal{M}_{L}$ is performed within the selected values. For all adjacent local update information $\mathbf{g}_k^{\scriptstyle(t)}$ and $\mathbf{g}_k'^{\scriptstyle(t)}$, the sensitivity of $\mathcal{M}_{os}$ can be bounded as $\Delta_{1}^{\scriptstyle(i)} = \|\mathbf{g}_k^{\scriptstyle(t)} - \mathbf{g}_k'^{\scriptstyle(t)}\|_1 \leq 2G^{\scriptstyle(t)}$. By \textbf{Proposition \ref{pro:oneshot mechanism}}, it satisfies $\epsilon_{1}^{\scriptstyle(i)}$-DP, where $\epsilon_{1}^{\scriptstyle(i)} = \frac{4SG^{\scriptstyle(t)}}{N\sigma_1}$. Additionally, the sensitivity corresponding to $\mathcal{M}_{L}$ is $\Delta_2^{\scriptstyle(i)} = \|\mathbf{g}_k^{\scriptstyle(t)} - \mathbf{g}_k'^{\scriptstyle(t)}\|_1 \leq \frac{2SG^{\scriptstyle(t)}}{N}$ since we only consider the selected values. Using \textbf{Proposition \ref{pro:laplace mechanism}}, it satisfies $\epsilon_{2}^{\scriptstyle(i)}$-DP, where $\epsilon_{2}^{\scriptstyle(i)} = \frac{2SG^{\scriptstyle(t)}}{N\sigma_2}$. Therefore, by \textbf{Proposition \ref{prop:sequential mechanism}}, every client transmitting $\mathbf{g}_k^{(t)}$ satisfies $\epsilon^{\scriptstyle(i)}$-DP for every global iteration, where $\epsilon^{\scriptstyle(i)} = \epsilon_{1}^{\scriptstyle(i)} + \epsilon_{2}^{\scriptstyle(i)} = \frac{4SG^{\scriptstyle(t)}}{N\sigma_1} + \frac{2SG^{\scriptstyle(t)}}{N\sigma_2} = \frac{2SG^{\scriptstyle(t)}(\sigma_1 + 2\sigma_2)}{N\sigma_1\sigma_2}$.}

\emph{On the other hand, for the clients transmitting $\mathcal{Y}_k^{\scriptstyle(t)}$, since clients only use the updated encoder to extract features, $\mathcal{M}_{os}$ and $\mathcal{M}_{L}$ are applied to the encoder-side local update information before feature extraction process, which is deterministic. According to the \textbf{Proposition \ref{prop:post processing mechanism}}, the feature extraction maintains the same privacy guarantee. The sensitivity on the encoder of $\mathcal{M}_{os}$ is given by $\Delta_{1}^{\scriptstyle(ii)} = \|\mathbf{g}_{k, \boldsymbol{\theta}}^{\scriptstyle(t)} - \mathbf{g}_{k, \boldsymbol{\theta}}'^{\scriptstyle(t)}\|_1 \leq G^{\scriptstyle(t)}$, where $\mathbf{g}_{k, \boldsymbol{\theta}}^{(t)}$ represents the encoder-side local update information. Here, assuming a symmetric encoder–decoder architecture, $\|\mathbf{g}_{k, \boldsymbol{\theta}}^{\scriptstyle(t)}\|_1 \leq \frac{G^{\scriptstyle(t)}}{2}$ holds. The sensitivity of $\mathcal{M}_{L}$ is $\Delta_{2}^{\scriptstyle(ii)} = \|\mathbf{g}_{k, \boldsymbol{\theta}}^{\scriptstyle(t)} - \mathbf{g}_{k, \boldsymbol{\theta}}'^{\scriptstyle(t)}\|_1\leq \frac{SG^{\scriptstyle(t)}}{N}$. Hence, similar to the derivation of $\epsilon^{\scriptstyle(i)}$, each client transmitting $\mathcal{Y}_k^{\scriptstyle(t)}$ satisfies $\epsilon^{\scriptstyle(ii)}$-DP for every global iteration, where $\epsilon^{\scriptstyle(ii)} = \frac{SG^{\scriptstyle(t)}(\sigma_1 + 2\sigma_2)}{N\sigma_1\sigma_2}$.}

\emph{Finally, since $\epsilon^{\scriptstyle(i)} > \epsilon^{\scriptstyle(ii)}$, this completes the proof.}
\end{IEEEproof}
\end{thm}

\begin{remark}
\emph{In the proof of \textbf{Theorem \ref{thm:differential privacy}}, the DP advantage of FedSFR stems from the fact that feature vectors are generated solely by the JSCC encoder, which contains roughly half of the parameters of the full autoencoder. While in case (i) the DP mechanisms are applied to the entire JSCC model, in case (ii) they are applied only to the encoder, effectively reducing the mechanism’s sensitivity by about 50\%. Since DP guarantees are tightly coupled to the size and sensitivity of the NN components being privatized, this structural reduction substantially improves privacy. Consequently, even with DP mechanisms beyond the OL mechanism, encoder-only privatization in FedSFR provides a general DP advantage.}
\end{remark}

\vspace{-2mm}
\section{Numerical Results}\label{sec:numerical results}

\begin{figure*}[!t]
    \centering
    \setlength{\subfigcapskip}{-2mm}
    \subfigure[$(\sigma_1, \sigma_2) = (0.0001, 0.0001)$]{\includegraphics[width=0.3\linewidth]{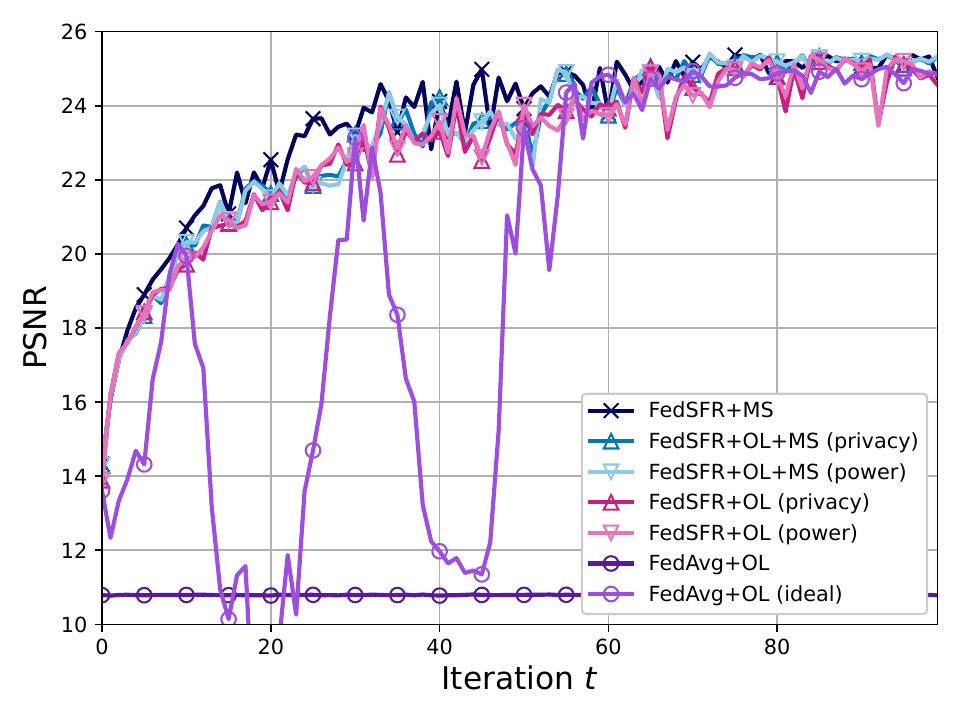}
    \label{fig:cifar dp ll}}
    \subfigure[$(\sigma_1, \sigma_2) = (0.001, 0.001)$]{\includegraphics[width=0.3\linewidth]{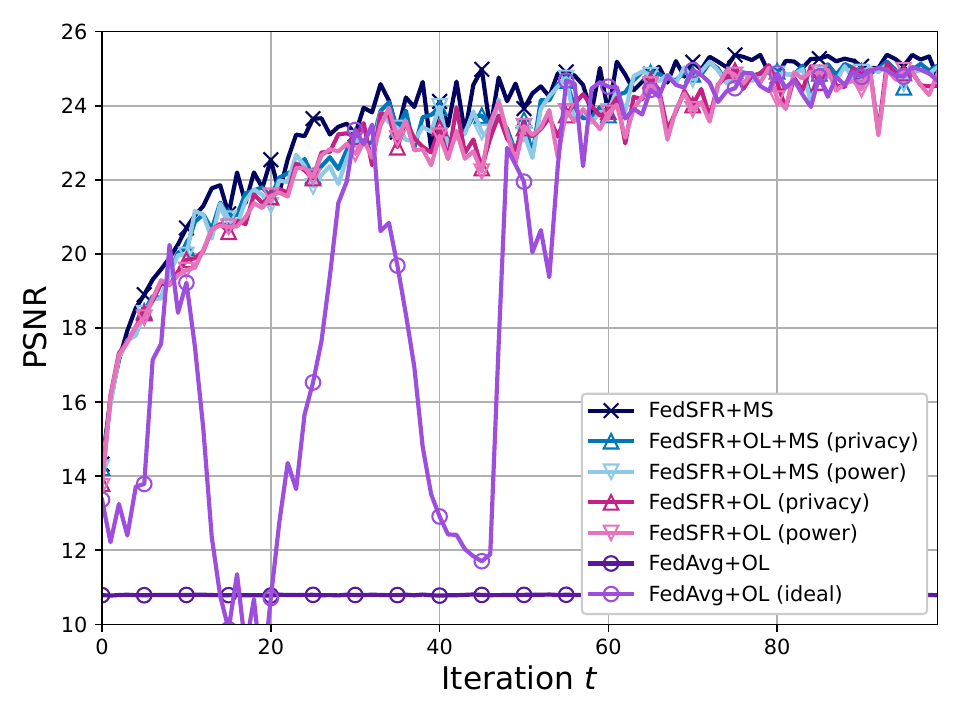}
    \label{fig:cifar dp mm}}
    \subfigure[$(\sigma_1, \sigma_2) = (0.01, 0.01)$]{\includegraphics[width=0.3\linewidth]{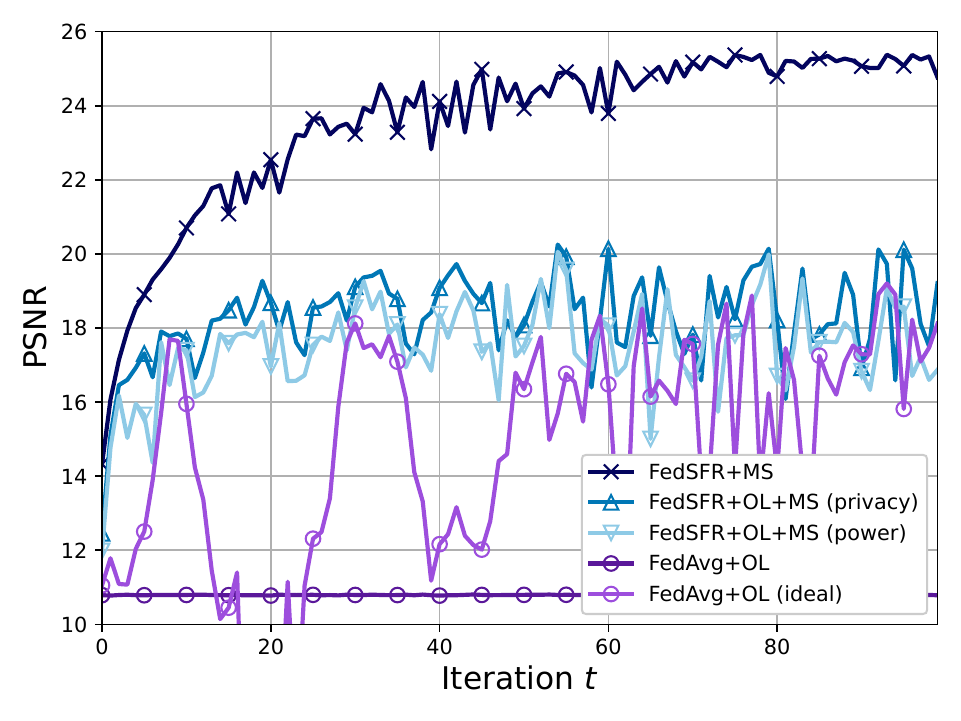}
    \label{fig:cifar dp hh}}
    \vspace{-3mm}
    \caption{Performance comparison between differentially private FedSFR and FedAvg with $\sigma_1 = \sigma_2$ for CIFAR-10 dataset.}
    \label{fig:cifar dp 1}
    \vspace{-5mm}
\end{figure*}

\begin{figure*}[!t]
    \centering
    \setlength{\subfigcapskip}{-2mm}
    \subfigure[$(\sigma_1, \sigma_2) = (0.0001, 0.01)$]{\includegraphics[width=0.3\linewidth]{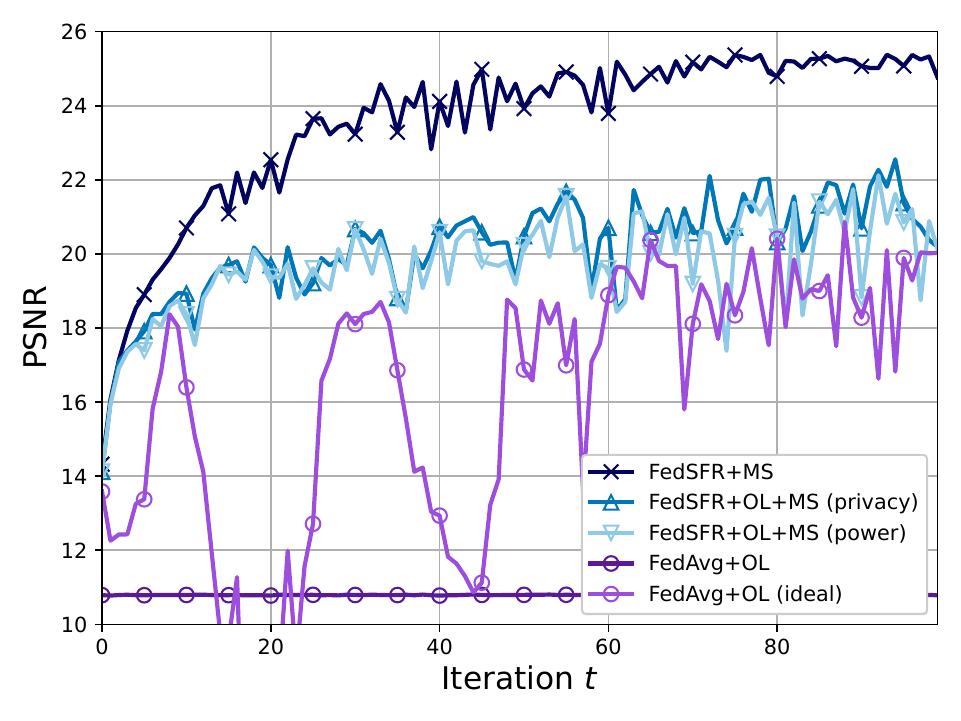}
    \label{fig:cifar dp lh}}
    \subfigure[$(\sigma_1, \sigma_2) = (0.02, 0.00005)$]{\includegraphics[width=0.3\linewidth]{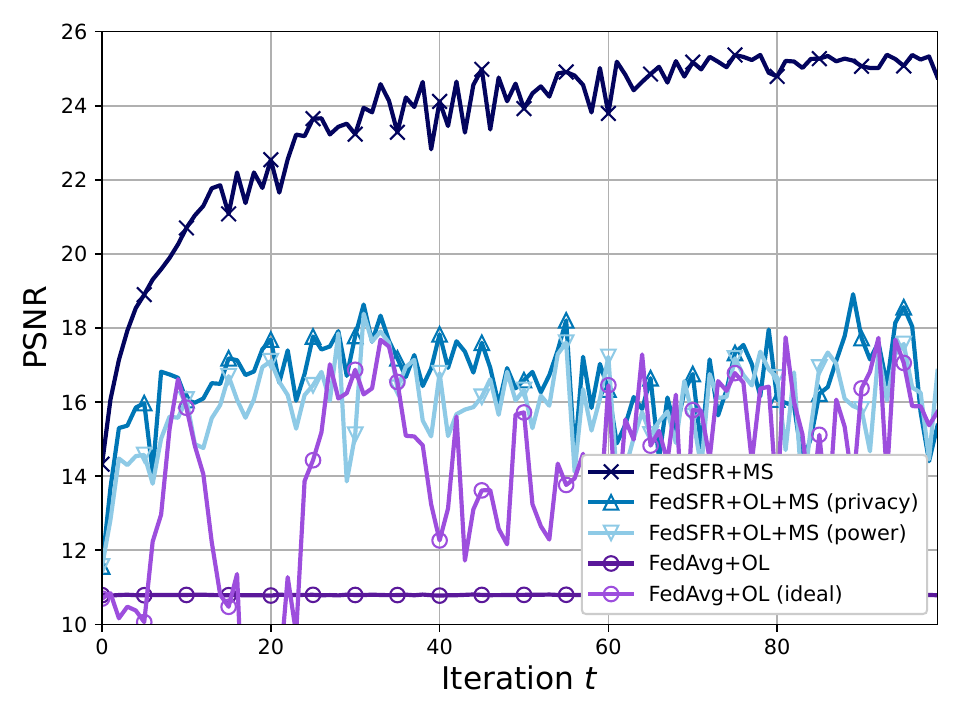}
    \label{fig:cifar dp hl}}
    \subfigure[SNR vs. PSNR]{\includegraphics[width=0.3\linewidth]{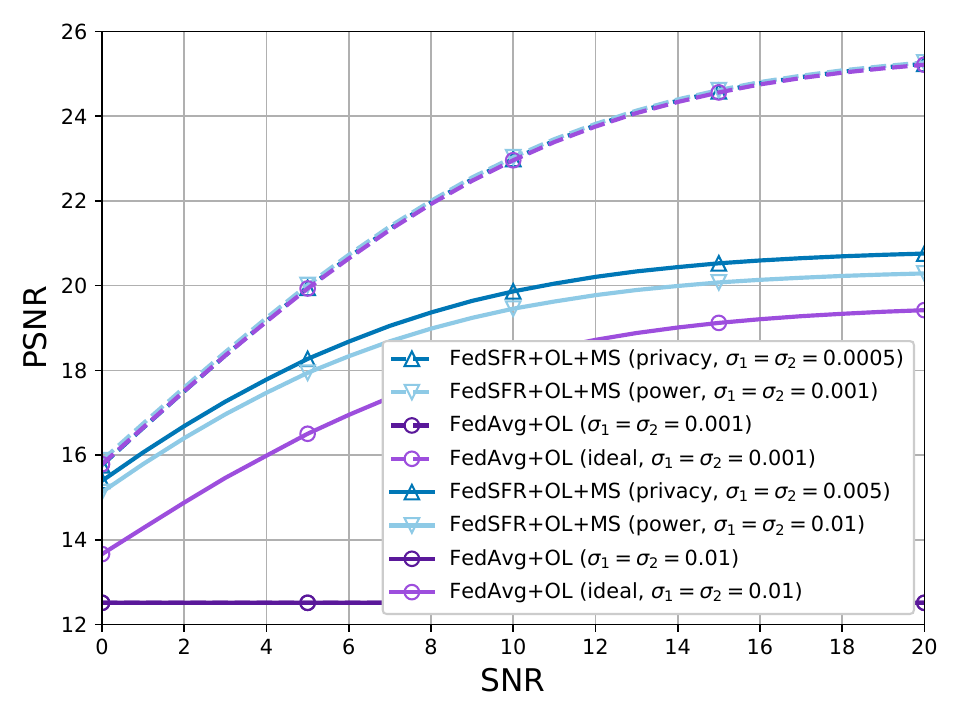}
    \label{fig:cifar dp psnr}}
    \vspace{-3mm}
    \caption{Performance comparison between differentially private FedSFR and FedAvg (a), (b) with $\sigma_1 \neq \sigma_2$, and (c) under Rayleigh fading channels with varying average SNR for CIFAR-10 dataset.}
    \label{fig:cifar dp 2}
    \vspace{-6mm}
\end{figure*}

We compare the proposed differentially private FedSFR framework with the FedAvg baseline using the CIFAR-10 dataset, which consists of $3\times32\times32$ RGB images. Reconstruction performance is evaluated in terms of the peak signal-to-noise ratio (PSNR). $K_m + K_o = 20$ clients are chosen from the total $K = 50$ clients, following the common partial participation setting in FL to reduce uplink burden \cite{huh2025feature, huh2026federated, kim2025privacy}. The sparsification level $S_k$ varies at each iteration in proportion to the channel capacity between the PS and client $k$ under Rayleigh fading with $\mathcal{CN}(0, 1)$, yielding $S_k/N$ approximately between $0.05$ and $0.25$. The threshold determining whether client $k$ is included in $\mathcal{A}_m^{\scriptstyle(t)}$ or $\mathcal{A}_o^{\scriptstyle(t)}$ is set to $S_\mathsf{th}/N = 0.15$, resulting in $K_o \approx 10$. The initial learning rates are $\eta_s^{\scriptstyle(0)} = 0.001$ at the PS and $\eta_c^{\scriptstyle(0)} = 0.01$ at the clients, both decayed by a factor of $0.9$ every $10$ global iterations. Both clients and the PS use batch size $16$ for $3$ epochs. The JSCC encoder and decoder are implemented with five convolutional and five transposed convolutional layers, respectively, resulting in about $0.32$M trainable parameters. The feature dimension is set to $d = 256$. The training SNR $\gamma$ is fixed at $20$ dB, and PSNR is evaluated under the same condition.

For DP evaluation, we adopt a baseline setting where Laplace noise $\mathrm{Lap}(0,\sigma_1)$ and $\mathrm{Lap}(0,\sigma_2)$ are applied to the oneshot and Laplace mechanisms with $(\sigma_1,\sigma_2) = (0.001, 0.001)$. According to \textbf{Theorem \ref{thm:differential privacy}}, \textbf{FedSFR+OL+MS (privacy)} denotes the configuration that matches the privacy level of \textbf{FedAvg+OL}, i.e., $\epsilon^{(i)} = \epsilon^{(ii)}$, by using reduced noise scales $(\sigma_1,\sigma_2) = (0.0005, 0.0005)$. In contrast, \textbf{FedSFR+OL+MS (power)} corresponds to the setting that provides twice the privacy level of \textbf{FedAvg+OL}, i.e., $\epsilon^{(i)} = 2\epsilon^{(ii)}$, by injecting noise with $(\sigma_1,\sigma_2) = (0.001, 0.001)$.

\underline{\textit{\textbf{FedSFR vs. FedAvg.}}}\; Fig. \ref{fig:cifar dp 1} compares the PSNR of DP-aided FedSFR and FedAvg with $\sigma_1 = \sigma_2$. Since \textbf{FedAvg+OL} fails to be properly trained under our heterogeneous wireless environments, we add \textbf{FedAvg+OL (ideal)}, where $S_k/N = 0.2$ for $\mathcal{A}_m^{\scriptstyle(t)}$ and $S_k/N = 0.1$ for $\mathcal{A}_o^{\scriptstyle(t)}$ with $K_o = 10$. As shown in Figs. \ref{fig:cifar dp ll} and \ref{fig:cifar dp mm}, \textbf{FedSFR+OL+MS} achieves faster convergence and more stable training than \textbf{FedAvg+OL (ideal)}, benefiting from FR, and our method is much more robust to heterogeneous settings than \textbf{FedAvg+OL}. When comparing \textbf{FedSFR+OL+MS} with \textbf{FedSFR+OL}, the proposed MS algorithm enhances both training stability and task performance due to more optimal training trajectory, particularly in the later stages of training. As illustrated in Fig. \ref{fig:cifar dp hh}, even under severe noise injection, where overall performance degradation is inevitable, the proposed method consistently outperforms the baselines. In addition, \textbf{FedSFR+OL+MS (privacy)} exhibits a more favorable training behavior than \textbf{FedSFR+OL+MS (power)}, which highlights the tradeoff between privacy guarantees and model performance.

Figs. \ref{fig:cifar dp lh} and \ref{fig:cifar dp hl} provide the same privacy guarantee by adjusting $\sigma_1$ and $\sigma_2$. A comparison between the two figures shows that increasing $\sigma_1$ while decreasing $\sigma_2$ degrades the performance of both frameworks. This trend indicates that heavy noise injected during the top-$S$ index selection ($\sigma_1$) disrupts the identification of important local updates more severely than magnitude perturbations ($\sigma_2$). Nevertheless, even under this unfavorable noise distribution, FedSFR consistently outperforms FedAvg, highlighting the robustness of FR. Beyond the effect of the MS algorithm, these observations indicate that, given privacy requirement, proper control of $\sigma_1$ and $\sigma_2$ plays a critical role in performance and warrants further investigation. To compare the semantic communication modules from FL, Fig. \ref{fig:cifar dp psnr} illustrates the PSNR under Rayleigh fading channels. When $\sigma_1 = \sigma_2 = 0.001$ (dashed lines), all methods exhibit similar PSNR due to their comparable final FL performance except \textbf{FedAvg+OL}, following Fig. \ref{fig:cifar dp mm}. However, when $\sigma_1 = \sigma_2 = 0.01$ (solid lines), FedSFR achieves better performance than FedAvg, and \textbf{FedSFR+OL+MS (privacy)} outperforms \textbf{FedSFR+OL+MS (power)}, following Fig. \ref{fig:cifar dp hh}.

\begin{figure*}[!t]
    \centering
    \setlength{\subfigcapskip}{-2mm}
    \subfigure[CelebA dataset]{\includegraphics[width=0.3\linewidth]{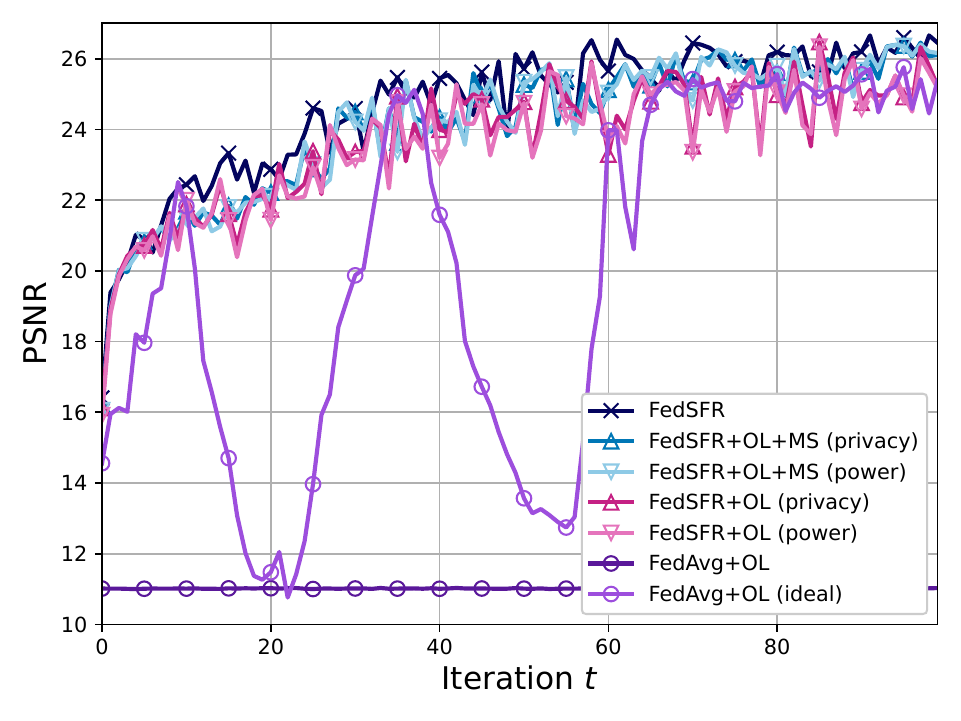}
    \label{fig:celeb dp mm}}
    \subfigure[$10$dB training SNR]{\includegraphics[width=0.3\linewidth]{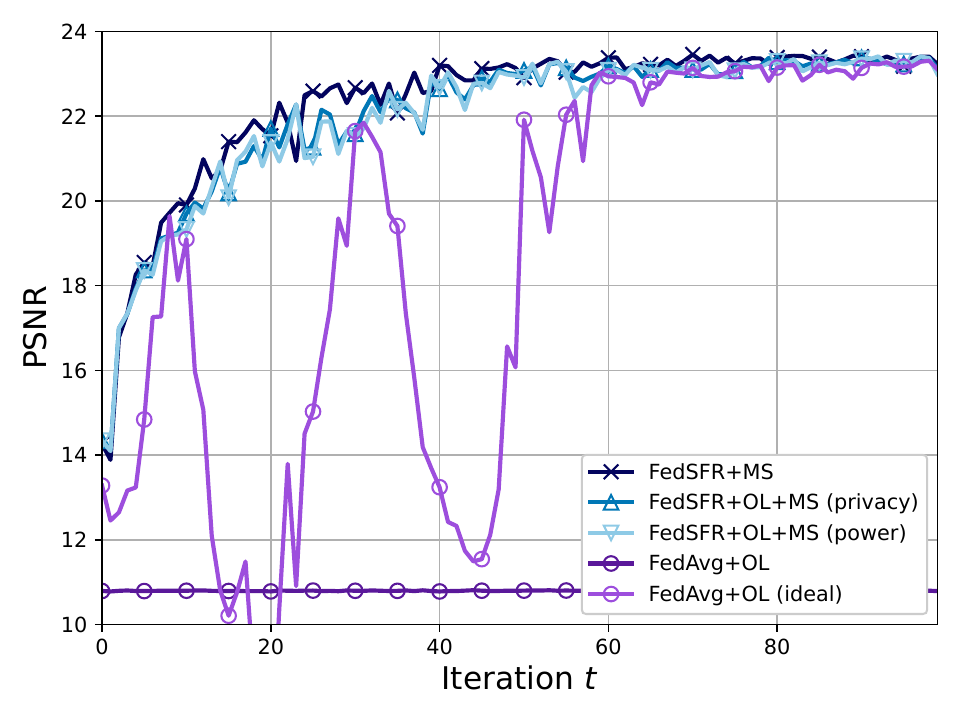}
    \label{fig:cifar dp 10}}
    \subfigure[Varying $S_\mathsf{th}$]{\includegraphics[width=0.3\linewidth]{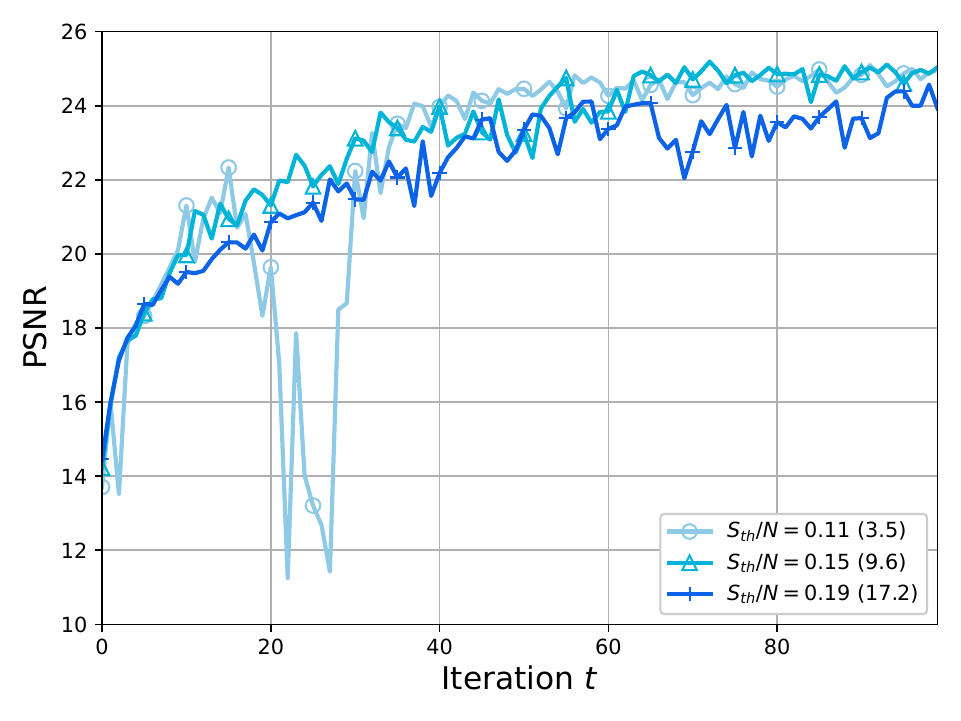}
    \label{fig:cifar dp rayleigh}}
    \vspace{-3mm}
    \caption{Performance comparison between differentially private FedSFR and FedAvg (a) for CelebA dataset, (b) under $10$dB training SNR, and (c) performance of \textbf{FedSFR+OL+MS (power)} with varying the threshold $S_\mathsf{th}$ for CIFAR-10 dataset.}
    \label{fig:dp ablation}
    \vspace{-6mm}
\end{figure*}

\underline{\textit{\textbf{Ablation Studies.}}}\; To assess the applicability of the proposed approach to higher-resolution settings, we perform experiments on the CelebA dataset, as illustrated in Fig. \ref{fig:celeb dp mm}. The experimental outcomes follow the same behavior as those obtained on the CIFAR-10 dataset, as shown in Fig. \ref{fig:cifar dp mm}, confirming the consistency of the proposed framework. Moreover, similar to the $20$dB setting in Fig. \ref{fig:cifar dp mm}, the $10$dB case in Fig. \ref{fig:cifar dp 10} surpasses the baselines, while converging faster due to its lower achievable performance. Fig. \ref{fig:cifar dp rayleigh} illustrates the impact of the threshold $S_\mathsf{th}$ and $|\mathcal{A}_o^{\scriptstyle(t)}|$ averaged over $T$ iterations is indicated in the legend. When $S_\mathsf{th}/N = 0.11$, the reduced FR effects due to fewer clients in $\mathcal{A}_o^{\scriptstyle(t)}$ result in unstable initial training. For $S_\mathsf{th}/N = 0.19$, the PSNR becomes the lowest because excessive FR occurs with fewer clients in $\mathcal{A}_m^{\scriptstyle(t)}$. The setting $S_\mathsf{th}/N = 0.15$ achieves the highest PSNR with a stable training process, yielding balanced FL behavior. These suggest that the proper choice of the threshold plays an important role in performance and deserves further investigation.

\section{Conclusion}\label{sec:conclusion}
This paper proposed a DP-aided FedSFR framework, which incorporates the OL mechanism, for updating semantic communication modules in heterogeneous wireless systems. By leveraging the inherent privacy benefit of clients that transmit locally generated semantic feature vectors instead of sparse local updates, we proved that feature-based transmission achieves strictly stronger DP guarantees than gradient-based transmission under an identical uplink budget. To mitigate the performance degradation induced by DP noise, we proposed the MS algorithm that adaptively retains beneficial server-side updates. Experiments demonstrated that the proposed DP-aided FedSFR consistently outperforms DP-enabled FedAvg in terms of image reconstruction performance and training stability, while maintaining rigorous privacy guarantees.

\bibliographystyle{IEEEtran}  
\bibliography{IEEEabrv,reference}

\end{document}